\documentclass[conference]{IEEEtran}


\makeatletter
\newcommand\semihuge{\@setfontsize\semihuge{22.3}{22}}
\makeatother


\usepackage[dvips]{color}
\usepackage{comment}
\usepackage{todonotes}
\usepackage{epsf}
\usepackage{epsfig}
\usepackage{times}
\usepackage{epsfig}
\usepackage{graphicx}
\usepackage{bbold}
\usepackage{mathtools}
\usepackage{mathrsfs}
\usepackage{amssymb}
\usepackage{pdfpages}
\usepackage{epstopdf}
\usepackage{dsfont}
\usepackage{lettrine} 
\usepackage{amsmath,epsfig,amssymb,algorithm,algpseudocode,amsthm,cite,url}
\usepackage{subcaption}
\allowdisplaybreaks
\usepackage{csquotes}

\usepackage{verbatim}
\usepackage[english]{babel}
\usepackage{amsmath,amssymb}

\captionsetup{%
	figurename=Fig.,
}
\usepackage[justification=centering]{caption}
\usepackage{verbatim}

\newtheorem{theorem}{\bf Theorem}

\newtheorem{proposition}{\bf Proposition}
\newtheorem{lemma}{\bf Lemma}

\begin{document}
	
\title{\Huge  Optimal Transport Theory for Cell Association in UAV-Enabled Cellular Networks \vspace{-0.15cm}}    

\author{ 
\IEEEauthorblockN{  Mohammad Mozaffari$^1$, Walid Saad$^1$, Mehdi Bennis$^2$, and M\'erouane Debbah$^3$}\vspace{-0.1cm}\\
	\IEEEauthorblockA{
		\small $^1$ Wireless@VT, Electrical and Computer Engineering Department, Virginia Tech, VA, USA,\\ Emails:\url{{mmozaff , walids}@vt.edu}.\\
		$^2$ CWC - Centre for Wireless Communications, Oulu, Finland, Email: \url{bennis@ee.oulu.fi}.\\
		$^3$ Mathematical and Algorithmic Sciences Lab, Huawei France R \& D, Paris, France, and CentraleSup´elec,\\   Universit´e Paris-Saclay, Gif-sur-Yvette, France, Email: \url{merouane.debbah@huawei.com}.
	}\vspace{-0.62cm}
	}\vspace{-2.9cm}
\maketitle\vspace{-2.99cm}

\begin{abstract}\vspace{-0.00cm}
	\textcolor{black}{In this paper, a novel framework for delay-optimal cell association in unmanned aerial vehicle (UAV)-enabled cellular networks is proposed.} 
	 In particular, to minimize the average network delay under any arbitrary spatial distribution of the ground users, the optimal cell partitions of UAVs and terrestrial base stations (BSs) are determined. To this end, using the powerful mathematical tools of optimal transport theory, \textcolor{black}{the existence of the solution to the optimal cell association problem is proved and the solution space is completely characterized}. \textcolor{black}{The analytical and simulation results show that the proposed approach yields substantial improvements of the~average network~delay.}\vspace{-0.01cm}

\end{abstract} \vspace{0.1cm}

\section{Introduction}

\textcolor{black}{The use of unmanned aerial vehicles (UAVs) such as drones and balloons is an effective technique for improving the quality-of-service (QoS) of wireless cellular networks due to their inherent ability to create line-of-sight (LoS) communication links\cite{ orfanus,mozaffari2, zhangLetter,zhang,Letter, Irem, HouraniModeling}.} 
 Nevertheless, there are many technical challenges associated with the UAV-based communication systems, which include deployment, path planning, flight time constraints, and cell association. 
  In \cite{Letter} and \cite{Irem}, the authors studied the efficient deployment of aerial base stations to maximize the coverage performance. 
  The path planning challenge and optimal trajectory of UAVs were addressed in \cite{Jeong} and \cite{Qing}. Moreover, UAV communications under flight time considerations was studied in \cite{HoverTime}. 
\textcolor{black}{Another important challenge in UAV-based communications is cell (or user) association.} In \cite{Vishal}, the authors analyzed the user-UAV assignment for capacity enhancement of heterogeneous networks. However, this work is limited to the case in which users are uniformly distributed within a geographical area. 
 \textcolor{black}{In \cite{Alonso}, the authors proposed a power-efficient cell association scheme while satisfying the rate requirement of users in cellular networks. However, in \cite{Alonso}, the authors do not consider the presence of UAVs and their objective function does not account for network delay. In \cite{OTUAV}, the optimal deployment and cell association of UAVs are determined with
the goal of minimizing the UAVs' transmit power while satisfying the users' rate requirements.
However, the work in \cite{OTUAV} mainly focused on the optimal deployment of the UAVs and does not analyze the existence and characterization of the cell association problem.} \textcolor{black}{Therefore, our work is different from \cite{OTUAV} in terms of the system model, the objective function, the problem formulation as well as analytical results.}
In fact, none of the previous studies in \cite{Letter, HoverTime,Qing,orfanus, mozaffari2, zhangLetter, HouraniModeling, Irem,Alonso, Vishal, Jeong,zhang,OTUAV}, addressed the delay-optimal cell association problem considering both UAVs and terrestrial base stations, for any arbitrary distribution~of~users.

The main contribution of this paper is to introduce a novel framework for delay-optimal cell association in a cellular network in which both UAVs and terrestrial BSs co-exist. In particular, given the locations of the UAVs and terrestrial BSs as well as any general spatial distribution of users, we find the optimal cell association by exploiting the framework of \emph{optimal transport theory} \cite{villani}. \textcolor{black}{Within the framework of optimal transport theory, one can address cell association problems for any general spatial distribution of users. In fact, the main advantage of optimal transport theory is to provide tractable solutions for a variety of cell association problems in wireless networks. In our problem,} we first prove the  existence of the optimal solution to the cell association problem, and, then, we characterize the solution space. \textcolor{black}{The results show that, our approach results in a significantly lower delay compared to a conventional signal strength-based association.\vspace{-0.1cm}}

\section{System Model and Problem formulation}\vspace{-0.05cm} 

Consider a geographical area $\mathcal{D}\subset \mathds{R}^2$ in which  $K$ terrestrial BSs in set $\mathcal{K}$ are deployed to provide
service for ground users that are spatially distributed according to a distribution $f(x,y)$ over the two-dimensional plane. In
addition to the terrestrial BSs, $M$ UAVs in set $\mathcal{M}$ are deployed as aerial base stations to enhance
the capacity of the network. 
 We consider a downlink
scenario in which the BSs and the UAVs use a frequency division multiple access (FDMA) technique to service the ground users. \textcolor{black}{The locations of BS $i\in \mathcal{K}$ and UAV $j\in \mathcal{M}$ are, respectively, given by $(x_i,y_i,h_i)$ and $(x^\textrm{uav} _{j},y^\textrm{uav} _{j},h^\textrm{uav}_j)$, with $h_i$ and $h^\textrm{uav}_j$ being the heights of BS $i$ and UAV $j$.} 
The maximum transmit powers of BS $i$ and UAV $j$ are  $P_i$ and $P_j^\textrm{uav}$. Let $W_i$ and $W_j$ be the total bandwidth available for each BS $i$ and UAV $j$.  
\textcolor{black}{Our performance metric is the \textit{transmission delay}, which is referred to as the time needed for transmitting a given number of bits. In this case, the delay is inversely proportional to the transmission rate.} We use $A_i$ and $B_j$ to denote, respectively, the area (cell) partitions in which the ground users are assigned to BS $i$ and UAV $j$. Hence, the geographical area is divided into $M+K$ disjoint partitions each of which is served by one of the BSs or the UAVs. 
 
 Given this model, our goal is to minimize the average network delay by optimal partitioning of the area. 
  Based on the spatial distribution of the users,  we determine the optimal cell associations to minimize the average network delay. Note that, the network delay significantly depends on the cell partitions due to the following reasons. First, the cell partitions determine the service area of each UAV and BS thus impacting the channel gain that each user experiences. Second, the number of users in each partition depends on the cell partitioning. In this case, since the total bandwidth is limited, the amount of bandwidth per user decreases as the  number of users in a cell partition increases. \textcolor{black}{Thus, users in the crowded cell partitions achieve a lower throughput which results in a higher delay}. 
 Next, we present the channel models. \vspace{-0.1cm} 

   \begin{figure}[!t]
   	\begin{center}
   		\vspace{-0.1cm}
   		\includegraphics[width=7.2cm]{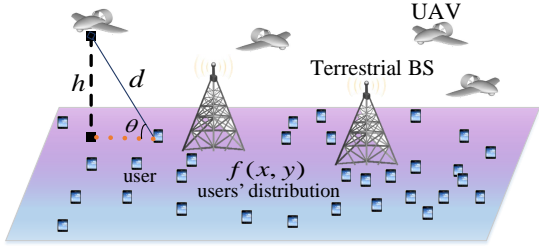}
   		\vspace{-0.01cm}
   		\caption{ \small Network model. }\vspace{-.52cm}
   		\label{SystemModel}
   	\end{center}
   \end{figure}    

\subsection{UAV-User and BS-User path loss models}\vspace{-0.05cm} 
\textcolor{black}{In UAV-to-ground communications, the probability of having LoS links to users depends on the locations, heights, and the number of obstacles, as well as the elevation angle between a given UAV and
it’s served ground user. In our model, we consider a commonly used probabilistic path loss model provided by International
Telecommunication Union (ITU-R), and the work in \cite{HouraniModeling}}.
The path loss between UAV $j$ and a user located at $(x,y)$ is \cite{HouraniModeling}:\vspace{-0.2cm}
\begin{equation}\label{Pr}
	\Lambda_j = \left\{\hspace{-0.16cm} \begin{array}{l}
		 K_o^{2}(d_j/d_o)^{2} {\mu _\text{LoS},} \,\,\,\hspace{0.35cm}{\text{LoS link,}}\\
		 K_o^{2}(d_j/d_o)^{2} {\mu _\text{NLoS},} \,\,\,\hspace{0.21cm}{\text{NLoS link,}}
	\end{array} \right. \vspace{-0.2cm} 
\end{equation}
where \begin{small}$K_o=\left(\frac{4\pi f_cd_o} {c}\right)^{2}$\end{small}, $f_c$ is the carrier frequency, $c$ is the speed of light, and $d_o$ is the free-space reference distance. Also, $\mu_\textrm{LoS}$ and $\mu_ \textrm{NLoS}$ are different attenuation factors considered for LoS and NLoS links. 
\begin{small}$d_j=\sqrt{(x-x^\textrm{uav} _{j})^2+(y-y^\textrm{uav} _{j})^2+h{^\textrm{uav}_j}^2}$\end{small} is the distance between UAV $j$ and an arbitrary ground user located at $(x,y)$. For the UAV-user link, the LoS probability is \cite{HouraniModeling}:\vspace{-0.2cm} 
\begin{equation} \label{PLoS}
	{\mathds{P}_{\text{LoS},j}} = \alpha {\left( {\frac{180}{\pi}\theta_j  - 15} \right)^\gamma}, \,\,\, \theta_j>\frac{\pi}{12},\vspace{-0.15cm}
\end{equation}
 where $\theta_j={\sin ^{- 1}}( \frac {h_j} {d_j})$ is the elevation angle (in radians) between the UAV and the ground user. Also, $\alpha$ and $\gamma$ are constant values reflecting the environment impact. Note that, the NLoS probability is  $\mathds{P}_{\text{NLoS},j}=1-\mathds{P}_{\text{LoS},j}$.

Considering $d_o=1$\,m, the average path loss is $K_o{ {{d_j}}^{ 2}}\left[ {{\mathds{P}_{\textrm{LoS},j}}{\mu _\textrm{LoS}} + {\mathds{P}_{\textrm{NLoS},j}}{\mu _\textrm{NLoS}}} \right]$. \textcolor{black}{Therefore, the received signal power from UAV $j$ considering an equal power allocation among its associated users will be:\vspace{-0.2cm}
\begin{equation}\label{Puav_ave}
\hspace{-0.01cm} \bar P_{r,j}^\textrm{uav} = P_j^\textrm{uav}/\left( N_j^\textrm{uav} K_o{ {{d_j}}^{2}}\left[ {{\mathds{P}_{\textrm{LoS},j}}{\mu _\textrm{LoS}} + {\mathds{P}_{\textrm{NLoS},j}}{\mu _\textrm{NLoS}}} \right]\right),\hspace{-0.3cm} \vspace{-0.2cm} 
\end{equation}
where $P_{j}^\textrm{uav}$ is the UAV's total transmit power, and $N_j^\textrm{uav} =N \iint_{{B_j}} {f(x,y)\textrm{d}x\textrm{d}y}$ is the average number of users associated with UAV $j$, with $N$ being the total number of users.}
For the BS-user link, we use the traditional path loss model. 
  \textcolor{black}{ In this case, the received signal power from BS $i$ at user's location $(x,y)$ will be:\vspace{-0.3cm}
\begin{equation}
	{P_{r,i}}=	{P_i} K_o ^{-1} d_i^{-{n}}/N_i, \vspace{-0.2cm} 
\end{equation}
where $d_i=\sqrt{(x-x_i)^2+(y-y_i)^2+h_i^2}$ is the distance between BS $i$ and a given user, $N_i =N \iint_{{A_i}} {f(x,y)\textrm{d}x\textrm{d}y}$ is the average number of users associated with BS $i$, and $n$ is the path loss exponent.}\vspace{-0.1cm}

 \subsection{Problem formulation}
 
 \textcolor{black}{Given the average received signal power in the UAV-user communication, the average throughput of a user located at $(x,y)$ connecting to a UAV $j$ can be approximated by:}\vspace{-0.2cm}
 \begin{align}
 &C_j^\textrm{uav}= {\frac{{W_j}}{{N_j^\textrm{uav}}}{{\log }_2}\big( {1 + \frac{\bar P_{r,j}^\textrm{uav}}{\sigma^2}} \big)},\vspace{-0.1cm} 
 \end{align}
\textcolor{black}{where $\sigma^2$ is the noise power for each user which is linearly proportional to the bandwidth allocated to the user.}
 
 \textcolor{black}{The throughput of the user if it connects to a BS $i$ is:\vspace{-0.1cm} 
 \begin{equation}
 {C_i} = \frac{{{W_i}}}{{{N_i}}}{\log _2}\big( {1 + \frac{{{P_{r,i}}}}{{  \sigma^2}}} \big).\vspace{-0.2cm}
 \end{equation}}
 Now, let $\mathcal{L}=\mathcal{K} \cup \mathcal{M}$ be the set of all BSs and UAVs. Also, here, the location of each BS or UAV is denoted by $\boldsymbol{s}_k$, $k\in\mathcal{L}$. We also consider ${D_k} = \left\{ \hspace{-0.15cm}\begin{array}{l}
 {A_k},\hspace{0.3cm} \textrm{if}\,\,k \in \mathcal{K},\\
 {B_k}, \hspace{0.3cm} \textrm{if}\,\, k \in \mathcal{M},
 \end{array} \right.$ denoting all the cell partitions, and \textcolor{black}{$
 Q\left( {\boldsymbol{v},\boldsymbol{s}_k,D_k} \right) = \left\{ \hspace{-0.15cm}\begin{array}{l}
b/ {C_k},\hspace{0.5cm} \textrm{if}\,\,k \in \mathcal{K},\\
b/ C_{k}^\textrm{uav},\hspace{0.3cm} \textrm{if}\,\,k \in \mathcal{M},
 \end{array} \right.$ where $\boldsymbol{v}=(x,y)$ is the 2D locations of the ground users, and $b$ is the number of bits that must be transmitted to location $\boldsymbol{v}$.} Then, our optimization problem that seeks to minimize the average network delay over the entire area will be: \vspace{-0.1cm}
 \begin{align} \label{Opt1}
 &\mathop {\min }\limits_{{D_k}} \sum\limits_{k \in \mathcal{L}} {\int_{{D_k}} {Q\left( {\boldsymbol{v},\boldsymbol{s}_k,{D_k}} \right)f(x,y)\textrm{d}x\textrm{d}y} }, \\
\textrm{s.t.}\,\,  &\bigcup\limits_{k \in \mathcal{L}} {{D_k}}  = \mathcal{D},\,\,\,{D_l} \cap {D_m} = \emptyset ,\,\,\,\forall l \ne m \in \mathcal{L}. \vspace{-0.5cm} \label{Union} 
 \end{align} 
 \textcolor{black}{where both constraints in (\ref{Union})  guarantee that the cell partitions are disjoint and their union covers the entire area, $\mathcal{D}$.}

\section{Optimal Transport Theory for Cell Association}

Given the locations of the BSs and the UAVs as well as the distribution of the ground users, we find the optimal cell association for which the average delay of the network is minimized. 
\textcolor{black}{Let \textcolor{black}{${g_k}(z) = \frac{Nz}{{W_k}}$}, with $W_k$ being the bandwidth for each BS or UAV $k$ and $z$ is a generic argument.} 
\textcolor{black}{Also, we consider:\vspace{-0.2cm}
\begin{equation}
F(\boldsymbol{v},\boldsymbol{s}_k) = \left\{ \hspace{-0.2cm} \begin{array}{l}
b/{\log _2}\left(1+ P_{r,k}(\boldsymbol{v},\boldsymbol{s}_k)/\sigma^2 \right),\,\, \textrm{if}\,\, k\in\mathcal{K},\\
b/{\log _2}\left(1+ {\bar P_{r,j}^\textrm{uav}(\boldsymbol{v},\boldsymbol{s}_k)}/\sigma^2 \right),\,\, \textrm{if}\,\,  k\in\mathcal{M}.
\end{array} \right.\hspace{-0.1cm} \vspace{-0.2cm}
\end{equation}}

Now, the optimization problem in (\ref{Opt1}) can be rewritten as:
\begin{align} \label{Opt3}
&\hspace{-0.9cm}\mathop {\min }\limits_{{D_k}} \begin{small} \sum\limits_{k \in \mathcal{L}} {\int_{{D_k}} {\left[ {{g_k}\left( {\int_{{D_k}} {f(x,y)\textrm{d}x\textrm{d}y} } \right)F(\boldsymbol{v},\boldsymbol{s}_k)} \right]f(x,y)\textrm{d}x\textrm{d}y} }\end{small},\\
\textrm{s.t.}\,\,  & \bigcup\limits_{k \in \mathcal{L}} {{D_k}}  = \mathcal{D},\,\,\,{D_l} \cap {D_m} = \emptyset ,\,\,\,\forall l \ne m \in \mathcal{L},
\end{align} 
where $D_k$ is the cell partition of each BS or UAV $k$.

Solving the optimization problem in (\ref{Opt3}) is challenging and intractable due to various reasons. First, the optimization variables $D_k$, $ \forall k \in \mathcal{L}$, are sets of continuous partitions which are mutually dependent. Second, $f(x,y)$ can be any generic function of $x$ and $y$ that leads to the complexity of the given two-fold integrations. To overcome these challenges, next, we model this problem by exploiting \emph{optimal transport theory} \cite{villani} in order to characterize the solution.    



Optimal transport theory \cite{villani} allows analyzing complex problems in which, for two probability measures $f_1$ and $f_2$ on $\Omega \subset \mathds{R}^n$, one must find the optimal transport map $T$ from $f_1$ to $f_2$  that minimizes the following function:\vspace{-0.12cm}
 \begin{equation}
 {\mathop {\min }\limits_T \int_\Omega {c\left( {x,T(x)} \right)} f_1(x)\textrm{d}x;\,\,T:\Omega \to \Omega}, \vspace{-0.05cm}
 \end{equation}
 where $c(x,T(x))$ denotes the cost of transporting a unit mass from a location $x$ to a location $T(x)$.
 
 \textcolor{black}{Our cell association problem can be modeled as a semi-discrete optimal transport problem. In this case, the users follow a continuous distribution, and the base stations can be considered as discrete points. Then, we need to map the users to the BSs and UAVs such that the total cost function is minimized. In this case,  the optimal cell partitions are directly determined by the optimal transport map \cite{Crippa}. Next, we prove the existence of the optimal solution to the problem in (\ref{Opt3}).} \vspace{-0.1cm}  
\begin{theorem}
	\normalfont
\textcolor{black}{The optimization problem in (\ref{Opt3}) admits an optimal solution given $N\ne0$, and $\sigma\ne 0$.\vspace{-0.2cm}}
\end {theorem}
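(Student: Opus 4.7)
The plan is to establish existence through the direct method of the calculus of variations, recast in the language of semi-discrete optimal transport. First, I would reformulate the partition problem in terms of characteristic functions: for every admissible partition $\{D_k\}_{k \in \mathcal{L}}$, define $u_k = \mathbf{1}_{D_k} \in L^\infty(\mathcal{D})$, so that the disjointness/covering constraint in (\ref{Union}) becomes $u_k(\boldsymbol{v}) \in \{0,1\}$ with $\sum_{k \in \mathcal{L}} u_k(\boldsymbol{v}) = 1$ for almost every $\boldsymbol{v} \in \mathcal{D}$. The admissible set is naturally embedded in the relaxed convex set
\begin{equation*}
\mathcal{U} = \Big\{(u_k)_{k \in \mathcal{L}} \subset L^\infty(\mathcal{D};[0,1]) \, : \, \sum_{k} u_k = 1 \text{ a.e.}\Big\},
\end{equation*}
which is nonempty, convex, and sequentially compact in the weak-$*$ topology of $L^\infty(\mathcal{D})^{|\mathcal{L}|}$ by Banach--Alaoglu together with the bound $\|u_k\|_\infty \le 1$.

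Second, I would exploit the specific structure of the objective. Since $g_k(z) = N z / W_k$ is linear, the functional (\ref{Opt3}) rewrites as
\begin{equation*}
J(\{u_k\}) = \sum_{k \in \mathcal{L}} \frac{N}{W_k} \Big(\int_{\mathcal{D}} u_k \, f \, dx\, dy\Big) \Big(\int_{\mathcal{D}} u_k \, F(\boldsymbol{v},\boldsymbol{s}_k) f \, dx\, dy\Big),
\end{equation*}
a sum of products of two continuous linear forms in $u_k$. The hypotheses $N \ne 0$ and $\sigma \ne 0$ ensure that $F(\cdot,\boldsymbol{s}_k)$ is bounded on $\mathcal{D}$ (the logarithm in the denominator never vanishes and the received power is finite), so both $f$ and $F(\cdot,\boldsymbol{s}_k)f$ lie in $L^1(\mathcal{D})$. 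Consequently each linear form is weak-$*$ continuous, and $J$ is the sum of products of weak-$*$ continuous functionals, hence weak-$*$ continuous on $\mathcal{U}$. Moreover $J \ge 0$.

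Third, applying the Weierstrass theorem to a weak-$*$ continuous functional on the weak-$*$ compact set $\mathcal{U}$ yields a minimizer $(u_k^\star) \in \mathcal{U}$ of the relaxed problem. To close the argument I would recover a genuine partition from this relaxed minimizer. Two paths are available: appeal directly to the semi-discrete optimal transport existence result of Crippa \emph{et al.}\ cited in the paper, which guarantees that the optimizer of the associated transport problem is induced by a map (equivalently, $u_k^\star \in \{0,1\}$ a.e.); or give a bang-bang style argument, observing that for fixed scalars $m_k = \int u_k^\star f$ and $q_k = \int u_k^\star F_k f$ the minimization of $J$ reduces to a pointwise assignment problem $k^\star(\boldsymbol{v}) = \arg\min_k \frac{N}{W_k}(m_k F(\boldsymbol{v},\boldsymbol{s}_k) + q_k)$, whose solution is a hard partition with the same energy as $(u_k^\star)$.

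The main obstacle is precisely this last step: weak-$*$ limits of indicator functions are in general only $[0,1]$-valued, so one must rule out a genuinely fuzzy minimizer. The quadratic (as opposed to merely linear) coupling through the congestion term $g_k$ makes a naive extremality argument insufficient, and this is where invoking the semi-discrete optimal transport structure, together with the fact that $f$ does not charge the hyperplanes on which ties occur, is essential to conclude that the optimum is attained by an actual partition of $\mathcal{D}$.
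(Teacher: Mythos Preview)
Your argument is sound and differs in organization from the paper's. The paper never relaxes to $[0,1]$-valued densities; instead it decomposes the problem by first freezing the load vector $\boldsymbol{a}=(a_k)$ with $a_k=\int_{D_k}f$, noting that for each fixed $\boldsymbol{a}$ the cost $c(\boldsymbol{v},\boldsymbol{s}_k)=g_k(a_k)F(\boldsymbol{v},\boldsymbol{s}_k)$ is continuous (hence lower semi-continuous), invoking the Crippa semi-discrete existence lemma directly at that stage, and then concluding from compactness of the simplex $E=\{\boldsymbol{a}\ge 0:\sum_k a_k=1\}$ that a global minimizer exists. Your direct-method route---relax, establish weak-$*$ continuity of $J$ on the compact set $\mathcal{U}$, minimize, then un-relax---buys you an explicit justification of the outer minimization that the paper leaves implicit (its passage from ``$E$ compact'' to ``minimum attained'' silently presumes lower semi-continuity of the value function on $E$). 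Both proofs ultimately hinge on the same semi-discrete transport lemma; note, however, that your second un-relaxation path, the pure bang-bang linearization, is not self-sufficient: the pointwise rule $k^\star(\boldsymbol{v})=\arg\min_k \tfrac{N}{W_k}\bigl(m_k F(\boldsymbol{v},\boldsymbol{s}_k)+q_k\bigr)$ does not by itself force $J(\hat u)\le J(u^\star)$ because $J$ is genuinely quadratic in $u$, and the clean way to close it is to freeze the masses $m_k^\star$ and apply the Crippa lemma to the resulting linear-cost constrained problem---which is exactly your first path.
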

\begin{proof}
Let ${a_k} = \int_{{D_k}} {f(x,y)\textrm{d}x\textrm{d}y}$, \textcolor{black}{and for $\forall k \in \mathcal{L}$,}\\
\begin{small}{\hspace{-0.3cm}$E \hspace{-0.2cm}=\hspace{-0.2cm} \left\{ {\boldsymbol{a} = \left( {{a_1},{a_2},...,{a_{K + M}}} \right) \in {\mathds{R}^{K + M}};{a_k} \ge 0,  \sum\limits_{k = 1}^{K + M} {{a_k} = 1} } \right\}$}\end{small}.
Now, considering $f(x,y)=f(\boldsymbol{v})$ and $c\left( {\boldsymbol{v},{\boldsymbol{s}_k}} \right) = {g_k}({a_k})F\left( {\boldsymbol{v},{\boldsymbol{s}_k}} \right)$, for any given vector $\boldsymbol{a}$, problem (\ref{Opt3}) can be considered as a classical semi-discrete optimal transport problem. 
%
First, we prove that $c\left( {\boldsymbol{v},\boldsymbol{s}} \right)$ is a semi-continuous function. Considering the fact that $\boldsymbol{s}_k$ is discrete, we have:
$\mathop {\lim }\limits_{(\boldsymbol{v},\boldsymbol{s}) \to ({\boldsymbol{v}^*},{\boldsymbol{s}_k})} F\left( {\boldsymbol{v},\boldsymbol{s}} \right)\mathop  =  \mathop {\lim }\limits_{\boldsymbol{v} \to {\boldsymbol{v}^*}} F\left( {\boldsymbol{v},{\boldsymbol{s}_k}} \right)$.
Note that, given any $\boldsymbol{s}_k$, $k$ belongs to only of $\mathcal{K}$ and $\mathcal{M}$ sets. Given $\boldsymbol{s}_k$, $F(\boldsymbol{v},\boldsymbol{s}_k)$ is a continuous function of $\boldsymbol{v}$. Then, considering the fact that given $a_k$, $g_k(a_k)$ is constant, we have $\mathop {\lim }\limits_{(\boldsymbol{v},\boldsymbol{s}) \to ({\boldsymbol{v}^*},{\boldsymbol{s}_k})} g_k(a_k) F\left( {\boldsymbol{v},\boldsymbol{s}} \right) = g_k(a_k)F\left( {{\boldsymbol{v}^*},{\boldsymbol{s}_k}} \right)$. Therefore,  $c(\boldsymbol{v},\boldsymbol{s})$ is a continuous function and, hence, is also a lower semi-continuous function.
 Now, we use the following lemma from optimal transport theory:\vspace{-0.15cm} 
\begin {lemma}
\normalfont
Consider two probability measures $f$ and $\lambda$ on $\mathcal{D} \subset \mathds{R}^n$. Let $f$ be continuous and $\lambda = \sum\limits_{k \in \mathds{N}} {{a_k}{\delta _{{\boldsymbol{s}_k}}}}$ be a discrete probability measure. Then, for any lower semi-continuous cost function, there exists an optimal transport map from $f$ to $\lambda$ for which $\int_\mathcal{D} {c\left( {x,T(x)} \right)} f(x)\textrm{d}x$ is minimized \cite{Crippa}. \vspace{-0.1cm}  
\end{lemma}
\noindent Considering Lemma 1, for any $\boldsymbol{a}\in E$, the problem in (\ref{Opt3}) admits an optimal solution. Since $E$ is a unit simplex in $\mathds{R}^{M+K}$ which is a non-empty and compact set, the problem admits an optimal solution over the entire $E$.
\end{proof}

Next, we characterize the solution space of (\ref{Opt3}). 
\begin{theorem}
	\normalfont
	To acheive the delay-optimal cell partitions in (\ref{Opt3}), each user located at $(x,y)$ must be assigned to the following BS (or UAV): \vspace{-0.3cm}  
\textcolor{black}{
\begin{small}	
\begin{equation}\label{The2}
	k=\mathop {\arg \min }\limits_{l \in \mathcal{L}} \big\{\frac{{{a_l}}}{{{W_l}}}F({\boldsymbol{v}_o},{\boldsymbol{s}_l})\big\},
\end{equation}
\end{small}	\vspace{-0.35cm}\\
 Given (\ref{The2}), the optimal cell partition $D_{k}$ includes all the points which are assigned to BS (or UAV) $k$.}	
\end{theorem}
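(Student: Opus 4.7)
The plan is to derive the assignment rule (\ref{The2}) by exploiting the semi-discrete optimal transport structure that was already set up in the proof of Theorem 1. First, I would fix the optimal mass vector $\boldsymbol{a}^{\star}=(a_1^{\star},\ldots,a_{K+M}^{\star})\in E$ whose existence is guaranteed by Theorem 1, so that the coefficients $g_k(a_k^{\star})=Na_k^{\star}/W_k$ become prescribed positive constants and the cost $c(\boldsymbol{v},\boldsymbol{s}_k)=g_k(a_k^{\star})F(\boldsymbol{v},\boldsymbol{s}_k)$ is a fixed lower semi-continuous function on $\mathcal{D}\times\{\boldsymbol{s}_1,\ldots,\boldsymbol{s}_{K+M}\}$. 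With $\boldsymbol{a}^{\star}$ frozen, the objective in (\ref{Opt3}) collapses to the standard semi-discrete transport functional $\sum_{k\in\mathcal{L}}\int_{D_k}c(\boldsymbol{v},\boldsymbol{s}_k)f(\boldsymbol{v})\,\textrm{d}\boldsymbol{v}$, which is exactly the object Lemma 1 is engineered to handle.

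The key step is then pointwise minimization. Because the only constraints in (\ref{Opt3}) require that the $\{D_k\}$ tile $\mathcal{D}$ disjointly, and no per-cell mass constraint is imposed once $\boldsymbol{a}^{\star}$ is treated as a parameter, the integral is minimized by selecting, at each $\boldsymbol{v}_o\in\mathcal{D}$, an index achieving $\min_{l\in\mathcal{L}}c(\boldsymbol{v}_o,\boldsymbol{s}_l)$. I would make this rigorous by a direct exchange argument: if a set $A\subset D_{k^{\star}}^{\star}$ of positive Lebesgue measure satisfied $g_l(a_l^{\star})F(\boldsymbol{v},\boldsymbol{s}_l)<g_{k^{\star}}(a_{k^{\star}}^{\star})F(\boldsymbol{v},\boldsymbol{s}_{k^{\star}})$ for some $l\ne k^{\star}$ and all $\boldsymbol{v}\in A$, then reassigning $A$ from $D_{k^{\star}}^{\star}$ to $D_l^{\star}$ would strictly lower the cost evaluated with the frozen weights, contradicting optimality. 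Substituting $g_k(a_k^{\star})=Na_k^{\star}/W_k$ and cancelling the common positive constants $N$ and $b$ from the $\arg\min$ produces the criterion (\ref{The2}); the optimal cell $D_k^{\star}$ is then, by construction, the preimage of $k$ under the pointwise rule and a measurable-selection argument resolves ties on any boundary set of measure zero.

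The part I expect will require the most care is the self-consistency between the partition and its induced weights: the quantities $a_k^{\star}=\int_{D_k^{\star}}f\,\textrm{d}x\,\textrm{d}y$ depend on the very partition we are characterizing, so the exchange above also perturbs $a_k^{\star}$, not only the integral of $F$. The clean way around this is to view Theorem 1 as supplying the outer minimization over $\boldsymbol{a}\in E$ and Theorem 2 as the inner characterization \emph{for this fixed} $\boldsymbol{a}^{\star}$; joint optimality then forces the optimal partition to be the weighted-Voronoi diagram associated with its own induced weights, which is exactly the fixed-point statement in (\ref{The2}). If needed, this decoupling can be formalized by letting $T^{\star}$ denote the optimal transport map provided by Lemma 1 for the cost $c(\cdot,\cdot)$ with $\boldsymbol{a}=\boldsymbol{a}^{\star}$ and recognizing $D_k^{\star}=(T^{\star})^{-1}(\boldsymbol{s}_k)$, which is known in semi-discrete transport to coincide with the pointwise $\arg\min$ cells.
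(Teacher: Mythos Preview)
Your route differs from the paper's in a substantive way, and the difference is exactly where your plan has a gap. The paper does \emph{not} freeze $\boldsymbol{a}^{\star}$ and then argue by pointwise $\arg\min$. Instead it takes a direct first variation of the \emph{coupled} functional in (\ref{Opt3}): pick $\boldsymbol{v}_o\in D_l$, transfer the ball $B_{\varepsilon}(\boldsymbol{v}_o)$ from $D_l$ to $D_m$, write the optimality inequality for the original objective (so that both the integration domains \emph{and} the masses $a_l,a_m$ move), divide by $a_{\varepsilon}=\int_{B_{\varepsilon}}f$, and send $\varepsilon\to 0$. The limit produces derivative terms $g'_l(a_l)$, $g'_m(a_m)$ in addition to the ``direct'' terms $g_l(a_l)F(\boldsymbol{v}_o,\boldsymbol{s}_l)$ and $g_m(a_m)F(\boldsymbol{v}_o,\boldsymbol{s}_m)$; the specific linear form $g_k(z)=Nz/W_k$ is then invoked so that all contributions become multiples of $\tfrac{a_k}{W_k}F(\boldsymbol{v}_o,\boldsymbol{s}_k)$, which yields (\ref{The2}).

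The self-consistency issue you flag in your last paragraph is real, and your proposed fix does not close it. With $\boldsymbol{a}^{\star}$ frozen and no mass constraint, your exchange lowers the surrogate $J_{\boldsymbol{a}^{\star}}(D)=\sum_k g_k(a_k^{\star})\int_{D_k}F(\boldsymbol{v},\boldsymbol{s}_k)f\,\mathrm{d}\boldsymbol{v}$, but you have never shown that $D^{\star}$ minimizes $J_{\boldsymbol{a}^{\star}}$; it minimizes the coupled $J(D)=\sum_k g_k\big(a_k(D)\big)\int_{D_k}Ff$, and these two functionals agree at $D^{\star}$ but not at the perturbed partition, so no contradiction follows. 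If instead you retain the mass constraints $\int_{D_k}f=a_k^{\star}$ (the honest inner problem in your two-level view), Lemma~1 gives existence, but the optimal cells of a \emph{constrained} semi-discrete transport are Laguerre cells $\{\boldsymbol{v}:c(\boldsymbol{v},\boldsymbol{s}_k)-\psi_k\le c(\boldsymbol{v},\boldsymbol{s}_l)-\psi_l\ \forall l\}$ with Kantorovich potentials $\psi_k$, not the unshifted $\arg\min$ cells; you would still need a separate argument that the $\psi_k$ drop out. That missing argument is precisely the $\varepsilon$-ball variation with the $g'_k$ terms that the paper performs, and it is where the linearity of $g_k$ is actually used.
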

\begin{proof}
\textcolor{black}{ As proved in Theorem 1, there exist optimal cell partitions $D_k$, $k\in \mathcal{L}$ which are the solutions to (\ref{Opt3}).} Now, consider two partitions $D_l$ and $D_m$, and a point $\boldsymbol{v}_o=(x_o,y_o)\in D_l$. Also, let $B_\epsilon(\boldsymbol{v}_o)$ be a ball with a center $\boldsymbol{v}_o$ and radius $\epsilon >0$. \textcolor{black}{Now, we generate the following new cell partitions ${{\mathord{\buildrel{\lower3pt\hbox{$\scriptscriptstyle\frown$}}\over D} }_k}$ (which are variants of the optimal partitions):} \vspace{-0.1cm}
\begin{small}
\begin{equation}
\left\{ \begin{array}{l}
{{\mathord{\buildrel{\lower3pt\hbox{$\scriptscriptstyle\frown$}} 
			\over D} }_l} = D_l\backslash {B_\varepsilon }({\boldsymbol{v}_o}),\\
{{\mathord{\buildrel{\lower3pt\hbox{$\scriptscriptstyle\frown$}} 
			\over D} }_m} = D_m \cup {B_\varepsilon }({\boldsymbol{v}_o}),\\
{{\mathord{\buildrel{\lower3pt\hbox{$\scriptscriptstyle\frown$}} 
			\over D} }_k} = D_k,\,\,\,\,k \ne l,m.
\end{array} \right.
\end{equation}
\end{small}
Let ${a_\varepsilon } = \int_{{B_\varepsilon }({\boldsymbol{v}_o})} {f(x,y)\textrm{d}x\textrm{d}y} $, and ${{\mathord{\buildrel{\lower3pt\hbox{$\scriptscriptstyle\frown$}} 
			\over a} }_k} = \int_{{{\mathord{\buildrel{\lower3pt\hbox{$\scriptscriptstyle\frown$}} 
				\over D} }_k}} {f(x,y)\textrm{d}x\textrm{d}y}$. Considering the optimality of $D_k$, $k\in \mathcal{L}$, we have:
\begin{small}
\begin{align}
&\hspace{0.1cm}\sum\limits_{k \in \mathcal{K}} {\int_{{D_k}} { {{g_k}\left( {{a_k}} \right)F(\boldsymbol{v},{\boldsymbol{s}_k})} f(x,y)\textrm{d}x\textrm{d}y} } \nonumber \\
& {\mathop  \le \limits^{(a)} }\sum\limits_{k \in \mathcal{K}} {\int_{{{\mathord{\buildrel{\lower3pt\hbox{$\scriptscriptstyle\frown$}} 
					\over D} }_k}} { {{g_k}\left( {{{\mathord{\buildrel{\lower3pt\hbox{$\scriptscriptstyle\frown$}} 
							\over a} }_k}} \right)F(\boldsymbol{v},{\boldsymbol{s}_k})} f(x,y)\textrm{d}x\textrm{d}y} }. \label{SUM}\\
&\textrm{\textcolor{black} {\normalsize Now, canceling out the common terms in (\ref{SUM}) leads to:}} \nonumber \\
\hspace{0.1cm}&\int_{{D_l}} { {{g_l}\left( {{a_l}} \right)F(\boldsymbol{v},{\boldsymbol{s}_l}) } f(x,y)\textrm{d}x\textrm{d}y}  + \int_{{D_m}} { {{g_m}\left( {{a_m}} \right)F(\boldsymbol{v},{\boldsymbol{s}_m})} f(x,y)\textrm{d}x\textrm{d}y}\nonumber\\
& \le \int_{{D_m} \cup {B_\varepsilon }({\boldsymbol{v}_o})} { {{g_m}\left( {{a_m} + {a_\varepsilon }} \right)F(\boldsymbol{v},\boldsymbol{s}_m)} f(x,y)\textrm{d}x\textrm{d}y} \nonumber\\ &+\int_{{D_l}\backslash {B_\varepsilon }({\boldsymbol{v}_o})} { {{g_l}\left( {{a_l} - {a_\varepsilon }} \right)F(\boldsymbol{v},\boldsymbol{s}_l)} f(x,y)\textrm{d}x\textrm{d}y},\nonumber\\
&\int_{{D_l}} { {\left( {{g_l}\left( {{a_l}} \right) - {g_l}\left( {{a_l} - {a_\varepsilon }} \right)} \right)F(\boldsymbol{v},\boldsymbol{s}_l)} f(x,y)\textrm{d}x\textrm{d}y}\nonumber\\ & +\int_{{B_\varepsilon }({\boldsymbol{v}_o})} { {{g_l}\left( {{a_l} - {a_\varepsilon }} \right)F(\boldsymbol{v},\boldsymbol{s}_l)} f(x,y)\textrm{d}x\textrm{d}y}\nonumber\\
&\le \int_{{D_m}} { {\left( {{g_m}\left( {{a_m}}+ {a_\varepsilon } \right) - {g_m}\left( {{a_m}} \right)} \right)F(\boldsymbol{v},\boldsymbol{s}_m)} f(x,y)\textrm{d}x\textrm{d}y}\nonumber\\
& + \int_{{B_\varepsilon }({\boldsymbol{v}_o})} { {{g_m}\left( {{a_m} + {a_\varepsilon }} \right)F(\boldsymbol{v},{\boldsymbol{s}_m})} f(x,y)\textrm{d}x\textrm{d}y},\label{ineq}\vspace{-0.15cm}
\end {align} \vspace{-0.13cm}
\end{small}\\
\noindent\textcolor{black}{where $(a)$ comes from the fact that $D_k$, $\forall k \in \mathcal{L}$ are optimal and, hence, any variation of such optimal partitions, shown by ${{{\mathord{\buildrel{\lower3pt\hbox{$\scriptscriptstyle\frown$}} \over D} }_k}}$, cannot lead to a better solution. Now, we multiply both sides of the inequality in (\ref{ineq}) by $\frac{1}{a_\epsilon}$, take the limit when $\epsilon \to 0$, and use the following equalities:\vspace{-0.2cm}
\begin{small}
\begin{align}
&\mathop {\lim }\limits_{\varepsilon  \to 0} {a_\varepsilon } = 0,\label{C1}\\
&\mathop {\lim }\limits_{{a_\varepsilon } \to 0} \frac{{{g_l}({a_l}) - {g_l}({a_l} - {a_\varepsilon })}}{{{a_\varepsilon }}} = {g'_l}({a_l}),\label{C2}\\
&\mathop {\lim }\limits_{{a_\varepsilon } \to 0} \frac{{{g_m}({a_m} + {a_\varepsilon }) - {g_m}({a_m})}}{{{a_\varepsilon }}} = {g'_m}({a_m}),\label{C3}\vspace{-0.3cm}
\end{align}
\end{small}
then we have:\vspace{-0.2cm}
\begin{small}
\begin{align}
&{g'_l\left( {{a_l}} \right)\int_{{D_l}} \hspace{-0.2cm}F(\boldsymbol{v}_o,{\boldsymbol{s}_l})f(x,y)\textrm{d}x\textrm{d}y}+ {g_l}\left( {{a_l}} \right)F({\boldsymbol{v}_o},{\boldsymbol{s}_l})\nonumber\\
&\hspace{-0.1cm}\le {g'_m\left( {{a_m}} \right) \int_{{D_m}} \hspace{-0.3cm}F({\boldsymbol{v}_o},\boldsymbol{s}_m)f(x,y)\textrm{d}x\textrm{d}y}  + {g_m}\left( {{a_m}} \right)\hspace{-0.05cm}F(\boldsymbol{v}_o,\boldsymbol{s}_m).\hspace{-0.10cm} 
\end{align}  
\end{small}
Now, given ${g_k}(z) = \frac{Nz}{{W_k}}$, we can compute ${g'_l}({a_l}) = {\left. {\frac{{d{g_l}(z)}}{{dz}}} \right|_{z = {a_l}}}=\frac{N}{{W_k}}$, then, using ${a_k} \hspace{-0.05cm}=\hspace{-0.05cm} \int_{{D_k}} {f(x,y)\textrm{d}x\textrm{d}y}$ leads~to:
\begin{small}
	\begin{align}
&\frac{N}{{{W_l}}}{a_l}F({\boldsymbol{v}_o},{\boldsymbol{s}_l}) + \frac{{N{a_l}}}{{{W_l}}}F({\boldsymbol{v}_o},{\boldsymbol{s}_l})\nonumber\\ 
&\le \frac{N}{{{W_m}}}{a_m}F({\boldsymbol{v}_o},{\boldsymbol{s}_m}) + \frac{{N{a_m}}}{{{W_m}}}F({\boldsymbol{v}_o},{\boldsymbol{s}_m}),\nonumber
\end{align}
\begin{equation}
\textrm{as a result: }\frac{{{a_l}}}{{{W_l}}}F({\boldsymbol{v}_o},{\boldsymbol{s}_l}) \le \frac{{{a_m}}}{{{W_m}}}F({\boldsymbol{v}_o},{\boldsymbol{s}_m}).\label{proof} 
\end{equation}
\end{small}  
Finally, (\ref{proof}) leads to (\ref{The2}) that completes the proof.}
\end{proof}

Theorem 2 provides a precise cell association rule for ground users that are distributed following any general distribution $f(x,y)$. In fact, the inequality given in (\ref{proof}) captures the condition under which the user is assigned to a BS or UAV $l$. Under the special case of a uniform distribution of the users, the result in Theorem 2 leads to the classical SNR-based association in which users are assigned to base stations that provide strongest signal.  
\textcolor{black}{From Theorem 2, we can see that there is a mutual dependence between $a_l$ and $D_l$ (i.e. cell association), $\forall l\in \mathcal{L}$. \textcolor{black}{To solve the equation given in Theorem 2, we adopt an iterative approach which is shown to converge to the global optimal solution \cite{Crippa}}. In this case, we start with initial cell partitions (e.g. Voronoi diagram), and iteratively update the cell partitions based on Theorem 2.\vspace{-0.00cm}} 


\section{Simulation Results and Analysis}\vspace{-0.00cm}
For our simulations, we consider an area of size $4\,\text{km}\times 4 \,\text{km}$ in which 4 UAVs and 2 macrocell base stations are deployed based on a traditional grid-based deployment.
 The ground users are distributed according to a truncated Gaussian distribution with a standard deviation $\sigma_o$. This type of distribution which is suitable to model a hotspot area. 
\textcolor{black}{The simulation parameters are given as follows. $f_c=$2\,GHz, transmit power of each BS is 40\,W, and transmit power of each UAV is 1\,W. Also, $N=300$, $W_j=W_i=1\,\textrm{MHz}$, and the noise power spectral density is -170\,dBm/Hz. We consider a dense urban environment with $n=3$,  	$\mu_\textrm{LoS}=3\,\textrm{dB}$, $\mu_\textrm{NLoS}=23\,\textrm{dB}$, $\alpha=0.36$, and $\gamma=0.21$ \cite{HouraniModeling}. The heights of each UAV and BS are, respectively, 200\,m and 20\,m \cite{HouraniModeling, Vishal,mozaffari2}. All statistical results are averaged over a large number of independent runs.}




\textcolor{black}{In Fig.\,\ref{Delay}, we compare the delay of the proposed cell association with the traditional SNR-based association. We consider a truncated Gaussian distribution with a center (1300\,m,1300\,m), and $\sigma_o$ varying from 200\,m to 1200\,m. Lower values of $\sigma_o$ correspond to scenarios in which users are more concentrated around the hotspot center. Fig.\,\ref{Delay} shows  that the proposed cell association significantly outperforms the SNR-based association in terms of the average delay. For low $\sigma_o$ values, the average delay decreases by 72\% compared to the SNR-based association. This is due to the fact that, in the proposed approach, the impact of network congestion is taken into account. 
Hence, the proposed approach avoids creating highly loaded cells. In contrast, an SNR-based association can yield highly loaded cells. As a result, in the congested cells, each user will receive a low amount of bandwidth that leads a low transmission rate or equivalently high delay. In fact, compared to the SNR-based association case, our approach is more robust against network congestion and its performance is significantly less affected by changing $\sigma_o$.} 

\begin{figure}[!t]
	\begin{center}
		\vspace{-0.2cm}
		\includegraphics[width=7cm]{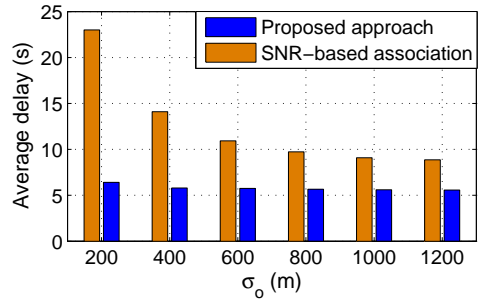}
		\vspace{-0.15cm}
		\caption{ \small Average network delay per 1Mb data transmission.\vspace{-.1cm}}
		\label{Delay}
	\end{center}
\end{figure}


\begin{figure}[t]
	\centering
	\hspace{-1.10cm}
	\begin{subfigure}[t]{0.2\textwidth}
		\begin{center}
			\vspace{-0.3cm}
			\includegraphics[width=4.6cm]{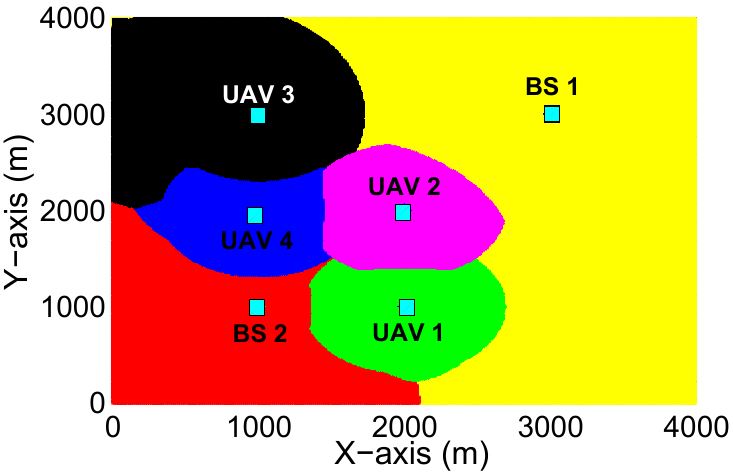}
			\vspace{-0.5cm}
			\caption{ \textcolor{black}{\footnotesize Proposed delay-optimal cell partitions.}\vspace{-.02cm}}
			\label{Ayy}
		\end{center}
	\end{subfigure}%
	~	\hspace{0.69cm}
	\begin{subfigure}[t]{0.2\textwidth}
		\begin{center}
			\vspace{-0.3cm}
			\includegraphics[width=4.6cm]{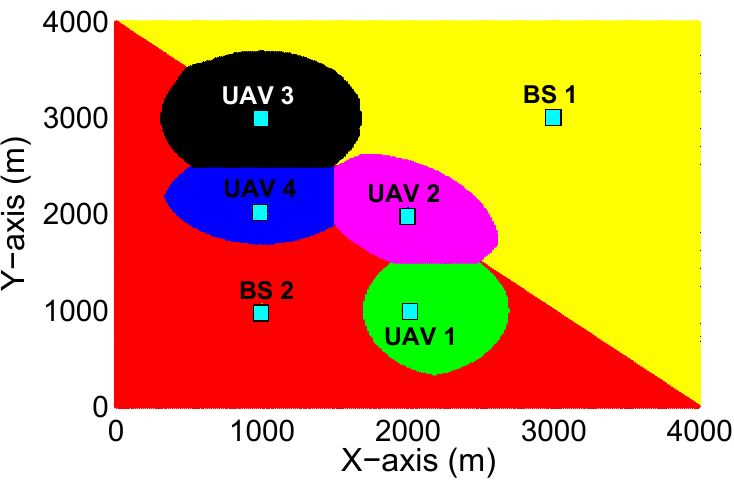}
			\vspace{-0.5cm}
			\caption{ \textcolor{black}{\footnotesize SNR-based association.} \vspace{-.02cm}}
			\label{Voro}
		\end{center}
	\end{subfigure}\vspace{-0.1cm}	
\caption{ \textcolor{black}{\small Cell partitions associated to UAVs and BSs given the non-uniform spatial distribution~of~users.} \vspace{-.18cm} 	\label{CellPart}}\vspace{-0.4cm} 
\end{figure}

\textcolor{black}{As an illustrative example, Fig.\,\ref{CellPart} shows the locations of the BSs and UAVs as well as the cell partitions
obtained using SNR-based association and the proposed delay-optimal
association. In this case, users are distributed based on a 2D truncated Gaussian distribution with mean values of (1300\,m,1300\,m), and $\sigma_o=1000\,\textrm{m}$. As shown in Fig.\,\ref{CellPart}, the size and shape of cells are different in these two association approaches. For instance, the
red cell partition in the proposed approach is smaller than the
SNR-based case. In fact, the red partition in the SNR-based
approach is highly congested and, consequently, its size is
reduced in the proposed case so as to decrease the congestion as well as the delay.} \vspace{-0.02cm}

\section{Conclusion}\vspace{-0.05cm}
In this paper, we have proposed a novel framework for delay-optimal cell association in UAV-enabled cellular networks. 
 In particular, to minimize the average network delay based on the users' distribution, we have exploited optimal transport theory to derive the optimal cell associations for UAVs and terrestrial BSs. \textcolor{black}{Our results have shown that, the proposed cell association approach results in a significantly lower network delay compared to an SNR-based association.} \vspace{-0.09cm} 

\bibliographystyle{IEEEtran}

\bibliography{references}
\end{document}